\newtheorem{fact}{Fact}
\begin{document}

\markboth{X. Wang et al.}{TensorBeat: Tensor Decomposition for Monitoring Multi-Person Breathing Beats with Commodity WiFi}

\title{TensorBeat: Tensor Decomposition for Monitoring Multi-Person Breathing Beats with Commodity WiFi}
\author{Xuyu Wang
\affil{Auburn University, Auburn, AL}
Chao Yang
\affil{Auburn University, Auburn, AL}
Shiwen Mao
\affil{Auburn University, Auburn, AL}}

\begin{abstract}
Breathing signal monitoring can provide important clues for human's physical health problems. Comparing to existing techniques that require wearable devices and special equipment, a more desirable approach is to provide contact-free and long-term breathing rate monitoring by exploiting wireless signals. In this paper, we propose TensorBeat, a system to employ channel state information (CSI) phase difference data to intelligently estimate breathing rates for multiple persons with commodity WiFi devices. The main idea is to leverage the tensor decomposition technique to handle the CSI phase difference data. The proposed TensorBeat scheme first obtains CSI phase difference data between pairs of antennas at the WiFi receiver to create CSI tensor data. Then Canonical Polyadic (CP) decomposition is applied to obtain the desired breathing signals. A stable signal matching algorithm is developed to find the decomposed signal pairs, and a peak detection method is applied to estimate the breathing rates for multiple persons. Our experimental study shows that TensorBeat can achieve high accuracy under different environments for multi-person breathing rate monitoring.
\end{abstract}
%
%


%
%


\terms{Health sensing, breathing rate estimation, tensor decomposition,
stable roommate matching, commodity WiFi, channel state information, phase difference}


\begin{bottomstuff}
This work is supported in part by the U.S. NSF under Grants CNS-1247955 and CNS-1320664, and by the Wireless Engineering Research and Education Center (WEREC) at Auburn University. 
Author's addresses: X. Wang, C. Yang, {and} S. Mao, 
Department of Electrical and Computer Engineering, Auburn University, Auburn, AL 36849-5201 USA. 
\end{bottomstuff}

\maketitle

\section{Introduction}

It is estimated that 100 million Americans suffer chronic health conditions such as lung disorders, diabetes, and heart disease~\cite{Call}. About three-fourths of the total US healthcare cost is spent on dealing with these health conditions. To reduce such costs, there is an increasing demand for long-term health monitoring in indoor environments. By tracking vital signs such as breathing and heart beats, the patient's physical health can be timely evaluated and meaningful clues for medical problems can be provided~\cite{H}. For example, monitoring breathing signals can help identify sleep disorders or anomalies for patients, as well as decreasing sudden infant death syndrome (SIDS) for sleeping infants~\cite{SIDS}. Traditional approaches for monitoring vital signs require patients to wear special devices, such as a capnometer~\cite{Capnometers} to estimate breathing rate, or a pulse oximeter~\cite{photoplethysmographic} on the finger to track heart beats. Recently, smartphones are used to estimate breathing rate by employing the built-in gyroscope, accelerometer~\cite{Zephyr}, and microphone~\cite{Sleep}, and for physical activity recognition using accelerometer~\cite{Tao2016}. 
This requires the patient to place a smartphone near-by and wear sensors in the monitoring environment. Moreover, the readily available smartphone sensors such as accelerometer and gyroscope can only monitor breathing rate for single person. Such existing approaches could be expensive, inconvenient to use, and annoying even for a short period of time. An alternative approach is in need to provide a contact-free and long-term breathing monitoring at low costs.

Recently, several RF based systems for vital signs tracking are proposed, which employ wireless signal to monitor breathing-induced chest movements and are mainly based on radar and WiFi techniques. For radar based vital signals monitoring, several techniques, such as the Doppler radar~\cite{Doppler} and ultra-wideband radar~\cite{ultrawideband}, are used to track vital signs, which require special hardware operated at high frequency and at a high cost. Moreover, the Vital-Radio system employs a frequency modulated continuous wave (FMCW) radar to track breathing and heart rates~\cite{smart}; it requires a customized hardware using a wide bandwidth from 5.46 GHz to 7.25 GHz. For WiFi based vital signs monitoring, mmVital~\cite{Millimeter} can exploit the received signal strength (RSS) of 60 GHz millimeter wave (mmWave) signals for breathing and heart rate estimation. mmVital also operates with a larger bandwidth of about 7 GHz, and uses a customized hardware with a mechanical rotator. Another technique UbiBreathe uses WiFi RSS for breathing rate estimation, which, however, requires the device be placed in the line of sight (LOS) path between the transmitter and the receiver~\cite{UbiBreathe}.

Compared with RSS, channel state information (CSI) provides fine-grained channel information in the physical layer, which can now be read from modified device drivers for several off-the-shelf WiFi network interface cards (NIC), such as the Intel WiFi Link 5300 NIC~\cite{Predictable} and the Atheros AR9580 chipset~\cite{Yaxiong}. Moreover, CSI includes both amplitude and phase values at the subcarrier-level for orthogonal frequency-division multiplexing (OFDM) channels, which is a more  stable and accuracy representation of channel characteristics than RSS, including the non-LOS (NLOS) components for small-scale fading. Recently, the authors in~\cite{Tracking} use the CSI amplitude data to monitor breathing and heart signals, which requires the person to remain in the sleeping mode. However, the measured CSI phase data has not been fully exploited in prior works, largely due to 
random phase fluctuation resulting from asynchronous times and frequencies of the transmitter and receiver.  
For multiple person breathing monitoring, because the reflected components in the received signal are from the chests of multiple persons, each moves slightly due to breathing and the movements are independent. Thus, vital signs monitoring and estimation for multiple persons still remains a challenging and open problem.

In this paper, we propose to utilize CSI phase difference data between antenna pairs to monitor the breathing rates of multiple persons. First, we show that when the person is in a stationary state, such as standing, sitting, or sleeping, the CSI phase difference data is 
highly stable in consecutively received packets, which can be leveraged for extracting the small, periodic breathing signal hidden in the received WiFi signal. In fact, phase difference is more robust than amplitude, which usually exhibits large fluctuations because of the attenuation over the link distance, obstacles, and the multipath effect. 
Moreover, the phase difference data captures and preserves the periodicity of breathing, when the wireless signal is reflected from the patients' chests. 
To extract the weak breathing signal, and more important, to distinguish among multiple persons, we propose to employ a tensor decomposition method to handle the phase difference data~\cite{tensor4,Luo2015,Hu2014}. 
We create the CSI tensor data by increasing the dimension of CSI data from one to three, which can be used to effectively separate different breathing signals in different clusters. 

We present a system termed {\em TensorBeat}, {\em Tensor} decomposition for estimating multiple persons breathing {\em Beats}, by exploiting CSI phase difference data. TensorBeat operates as follows. First, it obtains 60 CSI phase difference data from antenna pairs 1 and 2, and 2 and 3, at the receiver. 
Next, a data preprocessing procedure is applied to the measured phase difference data, including data calibration and Hankelization. In the data calibration phase, the direct current (DC) component and high frequency noises are removed. In the Hankelization phase, a two dimensional Hankel matrix is created based on the calibrated phase difference data from every subcarrier, and the rank of the Hankel matrix is analyzed. Then, we adopt 
Canonical Polyadic (CP) decomposition for estimating multiple persons breathing signs, and prove the uniqueness of the proposed CSI tensor. 
After CP decomposition, we obtain twice amount of breathing signals, which, however, are randomly indexed. We thus design a stable signal matching algorithm (for the stable roommate problem~\cite{matching1}) to identify the decomposed signal pairs for each person. 
Finally, we combine the decomposed signals in each pair and employ a peak detection method to estimate the breathing rate for each person.

We implement TensorBeat on commodity 5 GHz WiFi devices and verify its performance with five persons over six months in different indoor environments, such as a computer laboratory, a through-wall scenario, and a long corridor. The results show that the proposed TensorBeat system can achieve high accuracy and high success rates for multiple persons breathing rates estimation. Moreover, we demonstrate the robustness of the proposed TensorBeat system for monitoring multiple persons' breathing beats under a wide range of environmental parameters.

The main contributions of this paper are summarized as follows. 
\begin{enumerate}
	\item We theoretically and experimentally verify the feasibility of leveraging CSI phase difference for breathing monitoring. In particular, we analyze the measured phase errors in detail and demonstrate that phase difference data is stable and can be used to extract breathing signs. To the best of our knowledge, we are the first to leverage phase difference for multiple persons breathing rate estimation.  
	\item We are also the first to apply tensor decomposition for RF sensing based vital signs monitoring. We use the phase difference data to create a CSI tensor for all subcarrier at the three antennas of the WiFi receiver. We then incorporate CP decomposition to obtain the desired breathing signals. A stable signal matching algorithm is developed to match the decomposed signals for each person, while a peak detection method is used to estimate multiple persons' breathing rates.  
	\item We prototype the TensorBeat system with commodity 5 GHz WiFi devices and demonstrate its superior performance in different indoor environments with extensive experiments. The results show that the proposed TensorBeat system can achieve very high accuracy and high success rates for multiple persons breathing rate estimation.  
\end{enumerate}

The remainder of this paper is organized as follows. 
The preliminaries and phase difference analysis are provided in Section~\ref{sec:sysMod}. 
We present the TensorBeat system design and performance analysis in Section~\ref{sec:ad} and verify its performance with extensive experiments in Section~\ref{sec:sml}. We provide the related work in Section~\ref{sec:related}. Section~\ref{sec:conC} concludes this paper.

\section{Preliminaries and Phase Difference Information}\label{sec:sysMod}

\subsection{Tensor Decomposition Preliminaries}
A tensor is considered as a multidimensional array~\cite{tensor1}. The dimensions of the tensor are called as modes, and the order of the tensor is the number of the modes. For example, the $N$-order tensor is a $N$-mode tensor. Moreover, It is noticed that a first-order tensor is a vector, a second-order tensor is a matrix, and a third-order tensor is a cubic structure. Higher-order tensors with $(N \geq 3)$ have a wide range of applications such as data mining, brain data analysis, recommendation systems, wireless communications, computer vision, and healthcare and medical applications~\cite{tensor4}. For higher-order tensors, they face various computational challenging because of the exponential increase in time and space complexity with the orders increase of tensors. This leads to the curse of dimensionality. Fortunately, tensor decomposition as one powerful tool is leveraged for alleviating the curve by decomposing high-order tensors into a limited number of factors. Also, it can obtain hidden feature components, thus extracting physical insight of higher-order tensors. Two main tensor decompositions are tucker decomposition and CP decomposition~\cite{tensor1}. We consider CP decomposition for multiple persons breathing rate estimation because it can easily obtain the unique solution~\cite{tensor1}. On the other hand, we will provide some necessary definitions and equations of tensor decomposition, which can be used for our proposed algorithm.

\begin{definition} \label{D1}
 (Frobenius Norm of a Tensor). The Frobenius norm of a tensor $\chi \in \mathbb{K}^{I_1 \times I_2 \times \cdots \times I_N}$ is the square root of the sum of the squares of all its elements, which is defined by
\begin{eqnarray} \label{eq:D1}
  \left\| \chi \right\|_{F} =\sqrt{\sum_{i_1=1}^{I_1}\sum_{i_2=1}^{I_2} \cdots \sum_{i_N=1}^{I_N} x_{i_1,i_2 \cdots i_N}^2 }.
\end{eqnarray}
\end{definition}
where $\mathbb{K}$ stands for $\mathbb{R}$ or $\mathbb{C}$.

\begin{definition} \label{D2}
 (Kronecker Product). The Kronecker product of matrics $\bm A \in \mathbb{K}^{I \times J}$ and $\bm B \in \mathbb{K}^{M \times N}$ is denoted as $\bm A \otimes \bm B$. The result is an $(IM) \times (JN) $ matrix, which is defined by
\begin{eqnarray} \label{eq:D2}
  \bm A \otimes \bm B =\begin{bmatrix}
    a_{11}B & a_{12}B & \dots  & a_{1J}B \\
    a_{21}B & a_{22}B & \dots  & a_{2J}B \\
    \vdots & \vdots & \vdots & \vdots \\
    a_{I1}B & a_{I2}B & \dots  & a_{IJ}B
\end{bmatrix}.
\end{eqnarray}
\end{definition}

\begin{definition} \label{D3}
 (Khatri-Rao Product). The Khatri-Rao product of $A \in \mathbb{K}^{I \times J}$ and $B \in \mathbb{K}^{M \times J}$ is denoted as $A \odot B$. It is the column-wise Kronecker product with the size $(IM) \times J$, which is defined by
\begin{eqnarray} \label{eq:D3}
  \bm A \odot \bm B = [\bm a_1 \otimes \bm b_1, \bm a_2 \otimes \bm b_2, \cdots  ,\bm a_J \otimes \bm b_J]. 
\end{eqnarray}
\end{definition}

\begin{definition} \label{D4}
 (Hadamard product). The Hadamard product of $\bm A \in \mathbb{K}^{I \times J}$ and $\bm B \in \mathbb{K}^{I \times J}$ is denoted as $\bm A * \bm B$. It is the elementwise matrix product with the size $I \times J$, which is defined by
\begin{eqnarray} \label{eq:D4}
 \bm A * \bm B = \begin{bmatrix}
    a_{11}b_{11} & a_{12}b_{12} & \dots  & a_{1J}b_{1J} \\
    a_{21}b_{21} & a_{22}b_{22} & \dots  & a_{2J}b_{2J} \\
    \vdots & \vdots & \vdots & \vdots \\
    a_{I1}b_{I1} & a_{I2}b_{I2} & \dots  & a_{IJ}b_{IJ}
\end{bmatrix}. 
\end{eqnarray}
\end{definition}
\subsection{Channel State Information Preliminaries}

OFDM is an effective wireless transmission technique widely used in many wireless systems, including WiFi (such as IEEE 802.11 a/g/n) and LTE~\cite{Wang16,Xu14b}. 
The OFDM system partitions the wireless channel into multiple orthogonal subcarriers, where data is transmitted over all the subcarriers by using the same modulation and coding scheme (MCS) to combat frequency selective fading. With modified device driver for off-the-shelf NICs, such as the Intel 5300 NIC~\cite{Predictable} and the Atheros AR9580 chipset~\cite{Yaxiong}, the CSI data can be extracted, which represents fine-grained physical (PHY) information. Moreover, CSI captures rich wireless channel characteristics such as shadowing fading, distortion, and the multipath effect.

The WiFi OFDM channel can be regarded as a narrowband flat fading channel, which can be expressed in the frequency domain as 
\begin{eqnarray} \label{eq:CSI1}
  \vec{Y}=\mbox{H} \cdot \vec{X}+\vec{N}, 
\end{eqnarray}
where $\vec{Y}$ and $\vec{X}$ denote the received and transmitted wireless signal vectors, respectively, $\vec{N}$ is the additive white Gaussian noise, and $\mbox{H}$ represents the channel frequency response, which can be estimated from $\vec{Y}$ and $\vec{X}$. 

Although the WiFi OFDM system can use 56 subcarriers for data transmission on a 20 MHz channel, the Intel 5300 NIC device driver can only provide CSI for 30 out of the 56 subcarriers using the channel bonding technique. The channel frequency response of subcarrier $i$, denoted by $\mbox{H}_i$, is a complex value, given as
\begin{eqnarray} \label{eq:CSI 2}
  \mbox{H}_i=\mathcal{I}_i+j {\mathcal{Q}_i} =|\mbox{H}_i|\exp{\left( j {\angle{\mbox{H}_i}} \right)}, 	
\end{eqnarray}
where $\mathcal{I}_i$ and $\mathcal{Q}_i$ are the in-phase component and quadrature component, respectively; $|\mbox{H}_i|$ and $\angle{\mbox{H}_i}$ are the amplitude 
and phase response of subcarrier $i$, respectively.  

For indoor environments with multipath components, the channel frequency response of subcarrier $i$, $\mbox{H}_i$, can also be written as 
\begin{eqnarray} \label{eq:CSI 3}
  \mbox{H}_i=\sum_{k=0}^{K}r_k \cdot e^{-j2 \pi f_i \tau_k}, 
\end{eqnarray}
where $K$ is the number of multipath components, and $r_k$ and $\tau_k$ are the attenuation and propagation delay on the $k_{th}$ path, respectively.

Traditionally, the multipath components are regarded as harmful for wireless communications, since they cause the delay spread (requires guard times) and large fluctuation of received wireless signal (harder to demodulate). 
For indoor localization systems, 
multiple signals will be received from a single transmission, 
including one LOS signal and many reflected signals. It is a challenging problem to detect the LOS signal from the mixed multipath components, which is indicative of the direction of the transmitter~\cite{RSS_CSI,XWang14}. 
In this paper, however, we take a different view and show that the multiple signals reflected from the chests of multiple persons can be useful for estimating their breathing rates simultaneously. 

\subsection{Phase Difference Information}

As discussed, we exploit phase difference information for breathing rate estimation. We verify that the phase difference values between two adjacent antennas are stable for consecutively received packets in this section. In fact, the extracted phase information from the Intel 5300 NIC is high random and cannot be used for breathing monitoring. This is because of the asynchronous times and frequencies of transmitter and receiver NICs. Recently, two effective techniques are proposed for CSI phase calibration, to remove the unknown random components in CSI phase data. 
The first technique is to take a linear transformation for the CSI phase data over all the subcarirers~\cite{PADS,PhaseFi_J,PhaseFi}. 
The other technique is to use the phase difference 
between two adjacent antennas in the 2.4 GHz band, and to remove the measured average of phase difference for LOS recognition~\cite{PhaseU}. 
It can be seen that these techniques only obtain the stable phase information and phase difference data with a zero mean, respectively, but none of these are useful for breathing rate estimation. 

To prove the stability of measured CSI phase difference in the 5 GHz band, we write the measured phase of subcarrier $i$, denoted as $\angle{\widehat{\mbox{H}}_i}$, as~\cite{Speth99}
\begin{eqnarray} \label{eq:angle0}
	\angle{\widehat{\mbox{H}}_i}=\angle{\mbox{H}_i} + (\lambda_p+\lambda_s)m_i + \lambda_c + \beta + Z, 
\end{eqnarray}
where $\angle{\mbox{H}_i}$ is the true phase of CSI data, $m_i$ is the subcarrier index of subcarrier $i$, $\beta$ is the initial phase offset at the phase-locked loop (PLL), $Z$ is the measurement environment noise, and $\lambda_p$, $\lambda_s$ and $\lambda_c$ are the phase errors from the packet boundary detection (PBD), the sampling frequency offset (SFO), and central frequency offset (CFO), respectively~\cite{Speth99}, which are given by
\begin{eqnarray} \label{eq:phase errors}
\left\{   \begin{array}{l}
\lambda_p=2 \pi \frac{\Delta t}{N}  \\
\lambda_s=2 \pi (\frac{T'-T}{T}) \frac{T_s}{T_u} n \\
\lambda_c=2 \pi \Delta f T_s n, 
		              \end{array} \right. 
\end{eqnarray}
where $\Delta t$ is the packet boundary detection delay, $N$ is the FFT size, $T'$ and $T$ are the sampling periods from the receiver and the transmitter, respectively, $T_u$ is the length of the data symbol, $T_s$ is the total length of the data symbol and the guard interval, $n$ is the sampling time offset for the current packet, and $\Delta f$ is the center frequency difference between the transmitter and receiver. In fact, the values of $\Delta{t}$, $\frac{T'-T}{T}$, $n$, $\Delta{f}$, and $\beta$ in~(\ref{eq:angle0}) and~(\ref{eq:phase errors}) are unknown, and the values of $\lambda_p$, $\lambda_s$, and $\lambda_c$ can be different for different packets. Thus, we cannot obtain the true phase $\angle{\mbox{H}_i}$ of CSI data from measured phase values. 

However, the measured phase difference on subcarrier $i$ is stable, which can be employed for breathing rate estimation. Since the three antennas (radios) of the Intel 5300 NIC are on the same NIC, they use the same system clock and the same down-converter frequency. The measured CSI phases on subcarrier $i$ from two adjacent antennas have the same $\lambda_p$, $\lambda_s$, $\lambda_c$, and $m_i$. The phase difference can be computed as
\begin{eqnarray} \label{eq:angle1}
  \Delta{\angle{\widehat{\mbox{H}}_i}}=\Delta{\angle{\mbox{H}_i}}+\Delta{\beta}+\Delta{Z}, 
\end{eqnarray}
where $\Delta{\angle{\mbox{H}_i}}$ is the true phase difference of subcarrier $i$, $\Delta{\beta}$ is the unknown difference in phase offsets, which is a constant~\cite{Phaser}, and $\Delta{Z}$ is the noise difference. 
Since in~\eqref{eq:angle1}, the random values $\Delta t$, $\Delta f$, and $n$ are all removed, the phase difference becomes more stable for back-to-back received packets. 
As an example, we plot in Fig.~\ref{D} the phase differences (marked as red dots) and the single antenna phases (marked as gray crosses) read from the $3$rd subcarrier 
for 500 consecutively received packets. It can be seen that the single antenna phase 
is nearly uniformly distributed between 0$^{\circ}$ and 360$^{\circ}$. However, all the phase difference data 
concentrate in a small sector between 330$^{\circ}$ and 340$^{\circ}$, which is significantly more stable than phase data.

\begin{figure} 
\centerline{\includegraphics[width=3.5in]{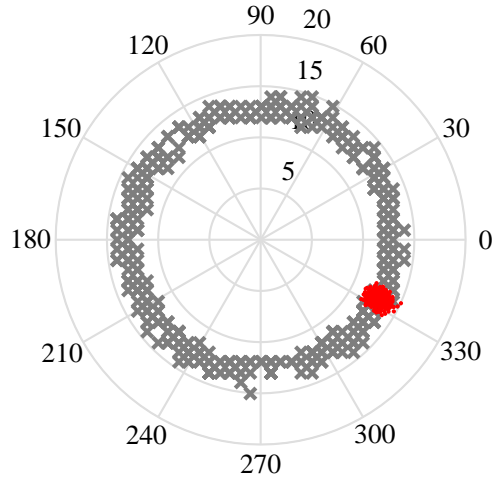}}
\caption{The phase differences (marked by red dots) and the single antenna phases (marked by gray crosses) of the $3$rd subcarrier for 500 back-to-back packets, plotted in the polar coordinate system.}
\label{D}
\end{figure}

Breathing rate estimation for multiple persons is a challenging problem, 
because the reflected components in the received signal are from the chests of multiple persons, each moves slightly due to breathing and the movements are independent. 
Thus, the peak-to-peak detection method cannot be effective for detecting the multiple breathing signals from the received signal.
The aggregated breathing signal from multiple persons is not a clearly periodic signal anymore. Fig.~\ref{motivation1} shows the detected breathing signals for one person (the upper plot) and three persons (the lower plot). 
We can see that for one person, the breathing signal exhibits a noticeable periodicity. So the breathing rate can be estimated by peak detection after removing the noise. However, the aggregated breathing signal of three persons
does not show noticeable periodicity for packet 400 to 600. Traditional FFT based methods can transform the received signal from the time domain to the frequency domain to estimate the breathing frequencies from multiple persons. Fig.~\ref{motivation2} shows the breathing rate estimation for one person (the upper plot) and three persons (the lower plot) with the FFT method. We can see that the estimated frequency for one person is 0.2 Hz, which is almost the same as the true breathing rate. However, for three-person breathing rate estimation, the FFT 
curve only has two peaks, and the estimated breathing rates are much less accurate. In particular, the third peak cannot be estimated. This is because FFT based methods require a larger window size to improve the frequency resolution. 
We show that the proposed tensor decomposition based method is highly effective for multi-person breathing rate estimation in the following section.

\begin{figure} 
\centerline{\includegraphics[width=3.5in]{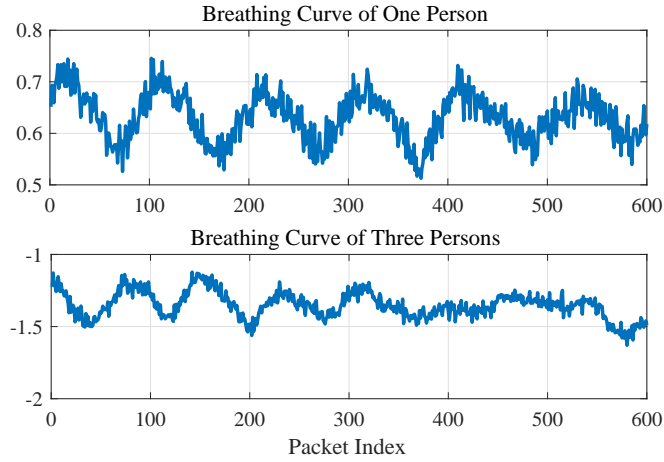}}
\caption{Detected breathing signals for one person (the upper plot) and three persons (the lower plot).}
\label{motivation1}
\end{figure}

\begin{figure} 
\centerline{\includegraphics[width=3.5in]{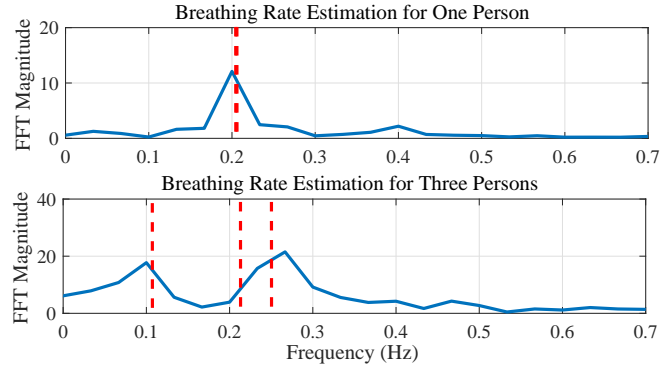}}
\caption{Breathing rate estimation for one person (the upper plot) and three persons (the lower plot) based on FFT.}
\label{motivation2}
\end{figure}

\section{The TensorBeat System}\label{sec:ad}

\subsection{TensorBeat System Architecture}

The main idea of the proposed TensorBeat system is to estimate multi-person breathing rates by employing a tensor decomposition method. To obtain CSI tensor data, we first create a two dimensional Hankel matrix with phase difference data from back-to-back received packets extracted from each subcarrier at each antenna. 
Then, by leveraging the phase differences from the 60 subcarriers, 
i.e., that between antennas 1 and 2, and between antennas 2 and 3, we can construct the third dimension of the CSI tensor data. The TensorBeat system will then leverage the created CSI tensor to estimate multi-person breathing signs. 
Our approach is motivated by two observations. First, for stationary modes of a person, such as standing, sitting, or sleeping, CSI phase difference from consecutively received packets is highly stable. It can thus be useful for extracting the periodic breathing signals. Second, the tensor decomposition method can effectively estimate multi-person breathing beats. We create the CSI tensor data by increasing the dimension of CSI data, from one dimension to three dimensions. The higher dimension CSI data is helpful to effectively separate different breathing signals by forming different clusters. This strategy is similar to the kernel method in traditional machine learning, such as support vector machine (SVM)~\cite{SVM} or multiple hidden layers in deep learning~\cite{learning,DeepFi,DeepFi_J}.

As shown in Fig.~\ref{system}, the TensorBeat system consists of four main modules: Data Extraction, Data Preprocessing, 
CP Decomposition, Signal Matching, and Breathing Rate Estimation. For Data Extraction, TensorBeat obtains 60 CSI phase difference data, 30 between antennas 1 and 2, and 30 between antennas 2 and 3, at the receiver with an off-the-shelf WiFi device. The Data Preprocessing module includes data calibration and Hankelization. Data calibration is implemented to remove the DC component and high frequency noises. Hankelization is to create a two dimensional Hankel matrix with phase difference data from each subcarrier for back-to-back received packets. 
The rank of the constructed Hankel matrix is then analyzed. 
We next apply CP decomposition to estimate multiple persons' breathing signals, and prove the uniqueness of the proposed CSI tensor. For Signal Matching, we first compute the autocorrelation function of the decomposed signals, and incorporate a stable roommate matching algorithm to identify the decomposed signal pairs for each person, where a preference list is computed with the dynamic time warping (DTW) values of the autocorrelation signals. For Breathing Rate Estimation, we combine the decomposed signals in each pair and use the peak detection method to compute the breathing rate for each person. 

\begin{figure} 
\centerline{\includegraphics[width=3.5in]{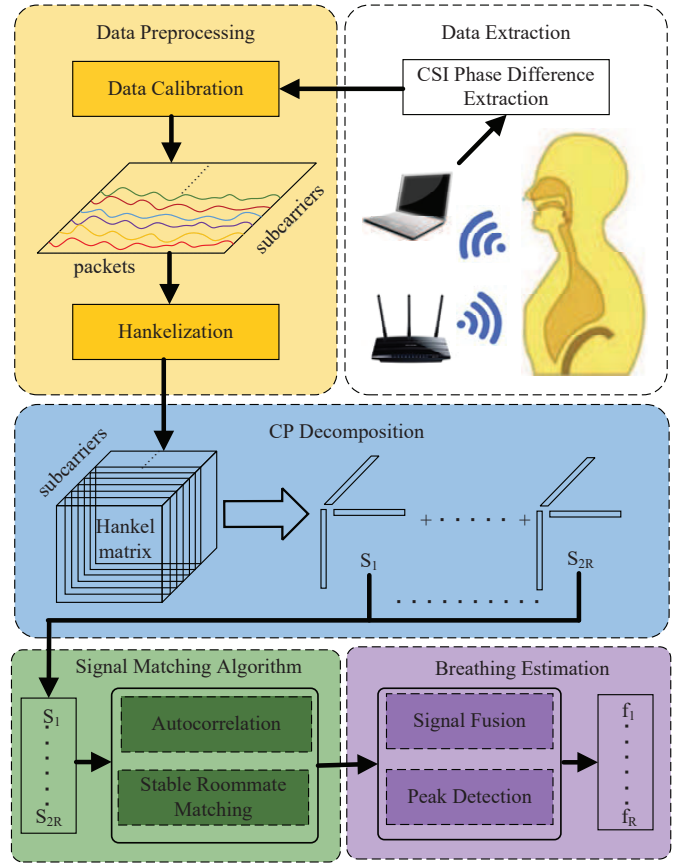}}
\caption{The TensorBeat system architecture.}
\label{system}
\end{figure}

In the remainder of this section, we present the design and analysis of each module of the TensorBeat system in detail. 

\subsection{Data Preprocessing}

\subsubsection{Data Calibration}

We use a 20 Hz sampling rate to obtain 60 CSI phase difference data, 30 between antennas 1 and 2, and 30 between antennas 2 and 3, at the receiver with an off-the-shelf WiFi device at 5 GHz for data extraction. Then, data calibration is applied to remove the DC component and high frequency noises. Because the DC component is also considered as a kind of signal, which may affect CSI tensor decomposition, TensorBeat adopts the Hampel filter to remove the DC component. 
Unlike traditional data calibration approaches that only remove the high frequency noise, we use the Hampel Filter for detrending the original CSI phase difference data to remove DC component. In fact, the Hampel Filter, which is set as a large sliding window with 150 samples wide and a small threshold of 0.001, is firstly used to extract the basic trend of the original data. Then, the detrended data is generated by subtracting the basic trend data from the original data. We also utilize the Hampel Filter to reduce the high frequency noise by using a sliding window of 6 samples wide and a threshold of 0.01. 

Fig.~\ref{calibration} presents an example of data calibration. 
We can see that the original phase differences of all the subcarriers have both a DC component and high frequency noises. With the proposed data calibration approach, it can be seen that the DC components are readily removed and all the subcarriers demonstrate a similar calibrated signal over the 600 packet range with low noise. Such calibrated signal will then be used for estimating the breathing rates of multiple persons.

\begin{figure} [!t]
\centerline{\includegraphics[width=3.5in]{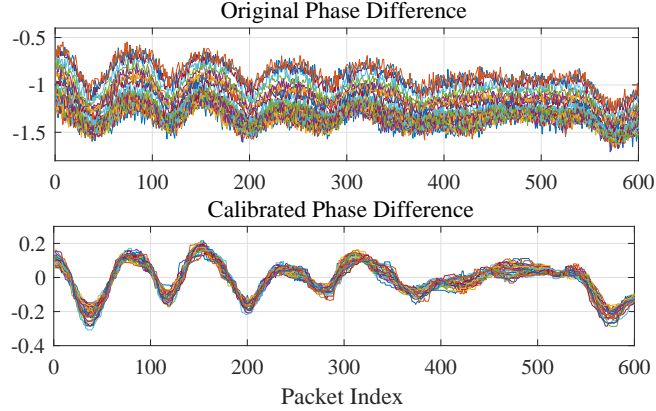}}
\caption{Data calibration: an example.}
\label{calibration}
\end{figure}

\subsubsection{Hankelization}

After data calibration, we obtain the CSI phase difference data matrix with a dimension of (number of packets $\times$ number of subcarriers). 
We then employ a Hankelization method to transform the large CSI matrix into a CSI tensor by expanding the packets into an additional dimension~\cite{tensor3}. Specifically, we rearrange the signals of each subcarrier into a 2-D Hankel matrix, so that the signals from all the 60 subcarriers can be considered as a 3-Dimensional tensor. Define $H_r$ as the constructed Hankel matrix with the size $I \times J$ for subcarrier $r$, which is created by mapping $N$ packets onto the Hankel matrix with $N=I+J-1$. 
We consider the Hankel matrix with size $I=J=\frac{N+1}{2}$. We thus obtain the Hankel matrix $H_r$ for subcarrier $r$, as
\begin{eqnarray}
{\bm H}_r=\begin{bmatrix}
    h_r(0) & h_r(1) & \dots  & h_r(\frac{N+1}{2}-1) \\
    h_r(1) & h_r(2) & \dots  & h_r(\frac{N+1}{2}) \\
    \vdots & \vdots & \vdots & \vdots \\
    h_r(\frac{N+1}{2}-1) & h_r(\frac{N+1}{2})  & \dots  & h_r(N-1)
\end{bmatrix},
\end{eqnarray}
where $h_r(i)$ is the calibrated phase difference data from subcarrier $r$ for packet $i$. In our experiments, we set $N=599$ and $I=J=300$. To determine the number of components needed for CSI tensor decomposition, we provide the following theorem for estimating $R$ breathing signals.

\begin{theorem} \label{th:rank}
If there are $R$ breathing signals in an indoor monitoring environment, the constructed Hankel matrix ${\bm H}_r$ for subcarrier $r$ has a rank of $2R$ when noise is negligible.
\end{theorem}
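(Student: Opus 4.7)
The plan is to treat each person's breathing as a pure sinusoidal oscillation and then invoke the classical Vandermonde factorization of a Hankel matrix whose generating sequence is a finite sum of complex exponentials.

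First, I would set up the signal model. In the noise-free, stationary-posture regime, the chest displacement of person $k$ modulates the $r$-th subcarrier phase difference as a single tone $A_{r,k}\cos(2\pi f_k\, n T_s+\phi_{r,k})$, where $f_k$ is that person's breathing rate, $T_s$ is the packet interval, and $A_{r,k}\neq 0$, $\phi_{r,k}$ absorb the reflection coefficient and path geometry at subcarrier $r$. Summing the $R$ independent contributions gives
\[
h_r(n)=\sum_{k=1}^{R} A_{r,k}\cos(2\pi f_k\, n T_s+\phi_{r,k}).
\]
Applying Euler's identity $\cos\theta=(e^{j\theta}+e^{-j\theta})/2$, I would rewrite this as a sum of $2R$ complex exponentials,
\[
h_r(n)=\sum_{\ell=1}^{2R} c_\ell\, z_\ell^{\,n},\qquad z_\ell\in\bigl\{e^{\pm j2\pi f_k T_s}\bigr\}_{k=1}^{R},
\]
with nonzero amplitudes $c_\ell\in\mathbb{C}$. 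Because the $R$ breathing rates are distinct and strictly positive (and comfortably below the Nyquist rate of $10$\,Hz at a 20\,Hz sampling rate), the $2R$ nodes $z_\ell$ are pairwise distinct on the unit circle.

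Next I would exploit the shift-invariant structure of the Hankel matrix. Since $({\bm H}_r)_{i,j}=h_r(i+j)=\sum_{\ell}c_\ell\, z_\ell^{\,i}\,z_\ell^{\,j}$, it admits the factorization
\[
{\bm H}_r={\bm V}\,{\bm D}\,{\bm V}^{\!\top},
\]
where ${\bm V}\in\mathbb{C}^{I\times 2R}$ is the Vandermonde matrix $V_{i,\ell}=z_\ell^{\,i}$ and ${\bm D}=\mathrm{diag}(c_1,\dots,c_{2R})$. Under the experimental setting $I=J=(N+1)/2=300\gg 2R$, the standard Vandermonde determinant argument (applied to any $2R$ rows of ${\bm V}$) shows that ${\bm V}$ has full column rank $2R$, and ${\bm D}$ is invertible because every $c_\ell\neq 0$. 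The factorization therefore exhibits $\mathrm{rank}({\bm H}_r)=2R$, as claimed.

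The main obstacle here is conceptual rather than algebraic: what needs care is justifying the single-tone model for each person's contribution, together with the genericity assumptions that the $R$ rates $f_k$ are distinct and every effective amplitude $A_{r,k}$ is nonzero on the chosen subcarrier $r$. Degenerate situations, such as two persons momentarily breathing at an identical instantaneous rate or a subcarrier whose reflection geometry nulls a particular person's contribution, would cause a rank drop, but these are non-generic and may be neglected in the nominal analysis. Granting these mild assumptions together with $I,J\geq 2R$, the Hankel rank equals exactly $2R$, which is precisely the number of CP components that must be supplied to the tensor decomposition step in the sequel.
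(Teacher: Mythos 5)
Your proposal is correct and follows essentially the same route as the paper's own proof: model each person's breathing as a single tone, split each cosine into two complex exponentials via Euler's formula, and read off $\mathrm{rank}({\bm H}_r)=2R$ from the Vandermonde factorization ${\bm H}_r={\bm V}\,\mathrm{diag}(c_1,\dots,c_{2R})\,{\bm V}^{\top}$ with distinct nodes. Your version is in fact slightly more careful than the paper's, since you make explicit the genericity assumptions (distinct sub-Nyquist rates, nonzero effective amplitudes, $I,J\geq 2R$) that the paper leaves implicit when asserting the Vandermonde matrix is full rank.
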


\begin{proof}
When analyzing signal data structure, we assume the noise is negligible. Moreover, let the $i$th breathing signal be represented as $S_i(t)=A_i\cos(w_i t + \varphi_i)$. 
The observed signal from a subcarrier can be represented by~\cite{tensor2}
\begin{eqnarray} \label{eq:H1}
    Y(t)=\sum_{i=1}^{i=R}K_iS_i(t)=\sum_{i=1}^{i=R}\hat{K_i}\cos(w_i t + \varphi_i),     
\end{eqnarray} 
where $K_i$ is the coefficient for breathing signal $i$ and the new coefficient 
$\hat{K_i}=K_i A_i$. 
The $i$th component of $Y(t)$, $\hat{K_i}\cos(w_i t + \varphi_i)$, can be decomposed using Euler's formula. We have 
\begin{eqnarray} \label{eq:H2}
\hat{K_i}\cos(w_i t + \varphi_i)
     && =\frac{\hat{K_i}}{2}\exp(j(w_i t + \varphi_i))+\frac{\hat{K_i}}{2}\exp(j(-w_i t  -\varphi_i)) \nonumber \\
     && =\frac{\hat{K_i}}{2}\exp(j\varphi_i)\exp(j w_i t)+\frac{\hat{K_i}}{2}\exp(-j\varphi_i)\exp(-j w_i t).
\end{eqnarray} 
Each breathing signal can be separated into two exponential signals with different coefficients. Combining all the $R$ breathing signals, we have 
\begin{eqnarray} \label{eq:H3}
    Y(t)&& = \sum_{i=1}^{R} \left( \frac{\hat{K_i}}{2}\exp(j\varphi_i)\exp(j w_i t)+\frac{\hat{K_i}}{2}\exp(-j\varphi_i)\exp(-j w_i t) \right) \nonumber \\
		    && = \sum_{i=1}^{2R} \tilde{K_i} Z_i^t,
\end{eqnarray} 
where the updated signal $Z_i^t$ is denoted as $Z_i^t=\exp(\pm j w_i t)$, and $\tilde{K_i}=\frac{\hat{K_i}}{2}\exp(\pm j\varphi_i)$ is its coefficient. For packets received at discrete times, we represent the received signal as $Y(n)=\sum_{i=1}^{2R} \tilde{K_i} Z_i^n$. Note that the combined signal can be considered as an exponential polynomial with $2R$ different exponential terms. Map signal $Y(n)$ for $n=1, 2, \cdots, N$ into a Hankel matrix with size $I=J=\frac{N+1}{2}$, we have  
\begin{eqnarray}
{\bm H}_r=\begin{bmatrix}
    \sum_{i=1}^{2R}\tilde{K_i}Z_i^0 & \sum_{i=1}^{2R}\tilde{K_i}Z_i^1 & \cdots  & \sum_{i=1}^{2R}\tilde{K_i}Z_i^{\frac{N+1}{2}-1} \\
    \sum_{i=1}^{2R}\tilde{K_i}Z_i^1 & \sum_{i=1}^{2R}\tilde{K_i}Z_i^2 & \cdots  & \sum_{i=1}^{2R}\tilde{K_i}Z_i^{\frac{N+1}{2}} \\
    \vdots & \vdots & \cdots & \vdots \\
    \sum_{i=1}^{2R}\tilde{K_i}Z_i^{\frac{N+1}{2}-1} & \sum_{i=1}^{2R}\tilde{K_i}Z_i^{\frac{N+1}{2}} & \cdots  & \sum_{i=1}^{2R}\tilde{K_i}Z_i^{N-1}
\end{bmatrix}. 
\end{eqnarray}
We can see that the Hankel matrix can be decomposed with  
Vandermonde decomposition~\cite{tensor3}, as
\begin{eqnarray} \label{eq:H4}
    {\bm H}_r= {\bm V}_r \cdot \mbox{diag} (\tilde{K_1},\tilde{K_1}, \cdots, \tilde{K}_{2R}) \cdot \tilde{\bm V}_r^T,    
\end{eqnarray} 
where the Vandermode matrices ${\bm V}_r \in \mathbb{K}^{\frac{N+1}{2} \times 2R}$ and $\tilde{\bm V}_r \in \mathbb{K}^{\frac{N+1}{2} \times 2R}$ are given by
\begin{eqnarray}
{\bm V}_r=\tilde{\bm V}_r=\begin{bmatrix}
    1 & 1 & \cdots & 1 \\
    Z_1 & Z_2 & \cdots  & Z_{2R} \\
    \vdots & \vdots & \cdots & \vdots \\
    Z_1^{\frac{N+1}{2}-1} & Z_2^{\frac{N+1}{2}-1} & \cdots  & Z_{2R}^{\frac{N+1}{2}-1}
\end{bmatrix}.
\end{eqnarray}
Because a Vandermode matrix is full rank, which is obtained by different poles, the rank of the Hankel matrix generated by $R$ breathing signals is $2R$. 
\end{proof}

According to Theorem~\ref{th:rank}, $2R$ signal components is required to separate the $R$ breathing signals. 

Next we consider the influence of measurement noise on the Hankel matrix ${\bm H}_r$. Because of 
noise, the Hankel matrix $H_r$ is actually a full-rank matrix. However, Theorem~\ref{th:rank} shows that the rank of the combined breathing signal is $2R$, meaning that the first $2R$ weighted decomposed components are much stronger than the remaining ones as long as the signal to noise ratio (SNR) is not very low. This shows that the Hankel matrix structure can be used to effectively separate breathing signals from white noise. Actually, the different signals will be well denoised and separated by using tensor decomposition, as to be discussed in Section~\ref{subsec:cpd}.

\subsection{Canonical Polyadic Decomposition \label{subsec:cpd}}

Once the CSI tensor is ready, we apply CP decomposition to estimate multiple persons' breathing signals. With CP decomposition, the CSI tensor data can be approximated as the sum of $2R$ rank-one tensors according to Theorem~\ref{th:rank}. Denote $\chi \in \mathbb{K}^{I \times J \times K} $ as a third-order CSI tensor, which can be obtained by the sum of three-way outer products as~\cite{tensor1,tensor4}
\begin{eqnarray} \label{eq:CP1}
    \chi \approx \sum_{r=1}^{2R}a_r \circ b_r \circ c_r, 
\end{eqnarray}
where $a_r$, $b_r$, $c_r$ are the vectors at the $r$th position for the first, second, and third dimension, respectively, and $2R$ is the number of decomposition components, which is the approximation rank of the tensor based on CP decomposition~\cite{rank1,rank2}. Their outer product is defined by
\begin{eqnarray} \label{eq:CP2}
    (a_r \circ b_r \circ c_r) (i,j,k)=a_r(i)b_r(j)c_r(k), \;\;\; \mbox{for all} & i,j,k.
\end{eqnarray}
We consider factor matrices ${\bm A}=[a_1,a_2, \cdots, a_{2R}] \in \mathbb{K}^{I \times 2R}$, ${\bm B}=[b_1,b_2, \cdots, b_{2R}] \in \mathbb{K}^{J \times 2R}$, and ${\bm C}=[c_1,c_2, \cdots, c_{2R}] \in \mathbb{K}^{K \times 2R}$ as the combination of vectors from rank-one components. Moreover, define ${\mathcal X}_{(1)} \in \mathbb{K}^{I \times JK}$, ${\mathcal X}_{(2)} \in \mathbb{K}^{J \times IK}$, and ${\mathcal X}_{(3)} \in \mathbb{K}^{K \times IJ}$ as 1-mode, 2-mode, and 3-mode matricization of CSI tensor $\chi \in \mathbb{K}^{I \times J \times K}$, respectively, which are obtained by fixing one mode and arranging the slices of the rest of the modes into a long matrix~\cite{tensor1}. Then, we can write the three matricized forms as
\begin{eqnarray} \label{eq:CP3}
    {\mathcal X}_{(1)} \approx {\bm A} ( {\bm C} \odot {\bm B})^T,\\
		{\mathcal X}_{(2)} \approx {\bm B} ( {\bm C} \odot {\bm A})^T,\\
		{\mathcal X}_{(3)} \approx {\bm C} ( {\bm B} \odot {\bm A})^T,
\end{eqnarray}
where $\odot$ denotes the Khatri-Rao product.

When the number of components $2R$ is given, we apply the Alternating Least Squares (ALS) algorithm, the most widely used algorithm for CP decomposition~\cite{tensor1}. To decompose the CSI tensor, we minimize the square sum of the differences between the CSI tensor $\chi$ and the estimated tensor. 
\begin{eqnarray} \label{eq:CP4}
    \min_{{\bm A}, {\bm B}, {\bm C}}\left\| \chi- \sum_{r=1}^{2R}a_r \circ b_r \circ c_r \right\|_F^2 .
\end{eqnarray} 
Note that~\eqref{eq:CP4} is not convex. However, the ALS algorithm can effectively solve the problem by fixing two of the factor matrices, to reduce the problem to a linear least squares problem with the third factor matrix as variable. If we fix $\bm B$ and $\bm C$, we can rewrite problem~(\ref{eq:CP4}) as 
\begin{eqnarray} \label{eq:CP5}
    \min_{{\bm A}}\left\| {\mathcal X}_{(1)}- {\bm A} ( {\bm C} \odot {\bm B})^T\right\|_F^2 .
\end{eqnarray} 
We can derive the optimal solution to problem~\eqref{eq:CP5} as ${\bm A}={\mathcal X}_{(1)}[( {\bm C} \odot {\bm B})^T]^{\dagger}$. Applying the property of pseudoinverse of the Khatri-Rao product, it follows that 
\begin{eqnarray} \label{eq:CP6}
    {\bm A}={\mathcal X}_{(1)}( {\bm C} \odot {\bm B})( {\bm C}^{T} {\bm C} \ast {\bm B}^{T} {\bm B})^{\dagger}, 
\end{eqnarray} 
where $\ast$ denotes the Hadamard product. This equation only requires computing the pseudoinverse of a $2R \times 2R$ matrix rather than a $JK \times 2R$ matrix. Note that $R$ is much smaller than $J$ and $K$, thus the computing complexity can be greatly reduced. Similarly, we can obtain the optimal solutions for ${\bm B}$ and ${\bm C}$ as
\begin{eqnarray} \label{eq:CP7}
    {\bm B}={\mathcal X}_{(2)}({\bm C} \odot {\bm A})({\bm C}^{T} {\bm C} \ast {\bm A}^{T} {\bm A})^{\dagger} \\
		{\bm C}={\mathcal X}_{(3)}({\bm B} \odot {\bm A})({\bm B}^{T} {\bm B} \ast {\bm A}^{T} {\bm A})^{\dagger}.
\end{eqnarray} 
Applying ALS to CP decomposition, we obtain matrices $\bm A$, $\bm B$, and $\bm C$. To guarantee the effectiveness of the decomposed components, 
we next examine 
the uniqueness of CP decomposition. The basic theorem on the uniqueness of CP decomposition is 
given in~\cite{tensor1}, which is provided in the following.

\begin{fact}\label{unique1}
For tensor $\chi$ with rank $L$, if $k_A+k_B+k_C \geq 2 L + 2$, then the CP decomposition of $\chi$ is unique,
where $k_{\bm A}$, $k_{\bm B}$, and $k_{\bm C}$ denote the $k$-rank of matrix ${\bm A}$, ${\bm B}$, ${\bm C}$, respectively. Here $k$-rank means the maximum value $k$ such that any $k$ columns are linearly independent~\cite{tensor1}.
\end{fact}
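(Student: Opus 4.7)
The statement is Kruskal's classical uniqueness theorem for CP decomposition, so the plan is to follow the standard reduction to a permutation/scaling equivalence. Suppose $\chi$ admits two rank-$L$ decompositions
\begin{equation*}
\chi = \sum_{r=1}^{L} a_r \circ b_r \circ c_r = \sum_{r=1}^{L} \tilde{a}_r \circ \tilde{b}_r \circ \tilde{c}_r,
\end{equation*}
with factor matrices $({\bm A}, {\bm B}, {\bm C})$ and $(\tilde{\bm A}, \tilde{\bm B}, \tilde{\bm C})$. The target is to produce a single permutation $\Pi$ and invertible diagonals $\Lambda_1, \Lambda_2, \Lambda_3$ with $\Lambda_1 \Lambda_2 \Lambda_3 = I$ such that $\tilde{\bm A} = {\bm A}\Pi\Lambda_1$, $\tilde{\bm B} = {\bm B}\Pi\Lambda_2$, $\tilde{\bm C} = {\bm C}\Pi\Lambda_3$.

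First I would pass to matricizations. Using $\mathcal{X}_{(1)} = {\bm A}({\bm C} \odot {\bm B})^T = \tilde{\bm A}(\tilde{\bm C} \odot \tilde{\bm B})^T$ together with the Khatri--Rao $k$-rank estimate $k_{{\bm C} \odot {\bm B}} \geq \min(k_{\bm B} + k_{\bm C} - 1, L)$, the hypothesis $k_{\bm A} + k_{\bm B} + k_{\bm C} \geq 2L + 2$ provides enough rank to force the column spans of ${\bm A}$ and $\tilde{\bm A}$ (and analogously for ${\bm C} \odot {\bm B}$ vs.~$\tilde{\bm C} \odot \tilde{\bm B}$) to coincide, and similarly under the two cyclic reorderings. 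This converts the tensor equality into three parallel matrix factorization identities to which a combinatorial matching argument can be applied.

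The core tool is \emph{Kruskal's permutation lemma}: if two matrices with $L$ columns and the same column span satisfy a suitable $k$-rank condition, then one is obtained from the other by a column permutation and rescaling. Applied to the pairs $({\bm A}, \tilde{\bm A})$, $({\bm B}, \tilde{\bm B})$, $({\bm C}, \tilde{\bm C})$, the lemma yields three permutations; the fact that they must be the \emph{same} $\Pi$ follows by tracing which index of $\tilde{a}_s \circ \tilde{b}_s \circ \tilde{c}_s$ sits in which rank-one summand of the original decomposition, and the relation $\Lambda_1 \Lambda_2 \Lambda_3 = I$ then follows by substituting back into $\chi = \sum_r a_r \circ b_r \circ c_r$.

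The hardest step is Kruskal's permutation lemma itself. Its proof is combinatorial: pick a column $\tilde{a}_s$, expand it in the basis of ${\bm A}$, and argue by contradiction that if more than one coefficient is nonzero, one can build a nontrivial linear combination among the Khatri--Rao columns $c_r \otimes b_r$ whose support has size at most $L - 1$, contradicting the combined $k$-rank bound on ${\bm B}$ and ${\bm C}$. The inequality $k_{\bm A} + k_{\bm B} + k_{\bm C} \geq 2L + 2$ is tight for exactly this counting. One may either reproduce Kruskal's original support-counting proof or invoke the shorter algebraic derivation of Jiang--Sidiropoulos; once the permutation lemma is in place, the rest of the argument is routine bookkeeping.
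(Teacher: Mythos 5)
The paper does not prove this statement at all: it is imported verbatim as a ``Fact'' with a citation to the Kolda--Bader survey, i.e., it is Kruskal's 1977 uniqueness theorem taken as known. So there is no in-paper argument to compare against; the relevant comparison is with the literature proof, and your outline does correctly identify the result and the standard architecture (reduce essential uniqueness to a permutation-plus-scaling equivalence, use the Khatri--Rao $k$-rank bound $k_{{\bm C}\odot{\bm B}}\geq\min(k_{\bm B}+k_{\bm C}-1,L)$ on the matricizations, and invoke Kruskal's permutation lemma). In that sense your route is the ``same'' one the paper is silently relying on.

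Taken as a proof rather than a roadmap, however, there are two concrete gaps. First, the permutation lemma is both understated and unproven: the actual lemma is not ``same column span plus a suitable $k$-rank condition implies permutation equivalence,'' but rather the statement that if $\omega(\tilde{\bm A}^T x)\leq L-r_{\bm A}+1$ implies $\omega(\tilde{\bm A}^T x)\leq\omega({\bm A}^T x)$ for every vector $x$ (with $\omega$ counting nonzero entries) and $k_{\bm A}\geq 2$, then $\tilde{\bm A}={\bm A}\Pi\Lambda$; verifying that hypothesis from the two decompositions is itself the bulk of Kruskal's argument, and your sketch only gestures at the support-counting contradiction without carrying it out. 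Second, the step ``the hypothesis provides enough rank to force the column spans to coincide'' is asserted, not derived --- one must first show $r_{\tilde{\bm A}}=r_{\bm A}$ and that the spans agree in all three modes, which uses the $k$-rank inequality nontrivially, and the subsequent claim that the three permutations coincide is usually handled by applying the lemma to one factor, substituting back, and peeling off the remaining two, rather than by three independent applications. None of this is wrong in direction, but as written the proposal defers exactly the hard content; citing Kruskal or Jiang--Sidiropoulos for the lemma, as you suggest, would make it a legitimate proof by reference --- which is effectively what the paper itself does.
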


Based on Fact~\ref{unique1}, we have the following theorem for the CSI tensor. 

\begin{theorem} \label{unique2}
For the proposed CSI tensor $\chi$ with rank $2R$, the CP decomposition of $\chi$ is unique.
\end{theorem}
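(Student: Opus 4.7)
The plan is to apply Kruskal's sufficient condition from Fact~\ref{unique1} to the three CP factor matrices $\bm A\in\mathbb{K}^{I\times 2R}$, $\bm B\in\mathbb{K}^{J\times 2R}$, and $\bm C\in\mathbb{K}^{K\times 2R}$. With tensor rank $L=2R$, Kruskal's inequality becomes $k_{\bm A}+k_{\bm B}+k_{\bm C}\geq 2(2R)+2=4R+2$, so it suffices to show that $k_{\bm A}=k_{\bm B}=2R$ and $k_{\bm C}\geq 2$.

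First I would pin down the column structure of $\bm A$ and $\bm B$. The proof of Theorem~\ref{th:rank} already shows that each subcarrier slice of the tensor admits the Vandermonde decomposition $\bm H_r=\bm V_r\,\mathrm{diag}(\tilde K_1,\dots,\tilde K_{2R})\,\tilde{\bm V}_r^{T}$ with generators $Z_i=\exp(\pm j w_i)$. Because the $R$ persons breathe at distinct angular frequencies $w_1,\dots,w_R$, the $2R$ generators $\{Z_1,\dots,Z_{2R}\}$ are pairwise distinct, making each Vandermonde matrix $\bm V_r$ of size $\frac{N+1}{2}\times 2R$ full column rank. Since the CP factor matrices in the first two modes must span the same column spaces that appear consistently across all slices, $\bm A$ and $\bm B$ are (column-wise) scalings of these Vandermonde matrices; with $I=J=\frac{N+1}{2}=300 \gg 2R$, every subset of $2R$ columns is linearly independent, so $k_{\bm A}=k_{\bm B}=2R$.

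Next I would argue $k_{\bm C}\geq 2$. Each column of $\bm C$ collects, across the $K=60$ subcarriers, the per-subcarrier weight attached to one rank-one component. Equation~(\ref{eq:CSI 3}) shows that the complex gain at subcarrier $i$ depends on the center frequency $f_i$ through the factor $e^{-j2\pi f_i\tau_k}$, so two distinct breathing components, whose reflection paths have different delays $\tau_k$ arising from the different chest locations of different persons, produce columns of $\bm C$ that are not scalar multiples of one another. Consequently no two columns of $\bm C$ are collinear, giving $k_{\bm C}\geq 2$. Combining the three estimates, $k_{\bm A}+k_{\bm B}+k_{\bm C}\geq 2R+2R+2=4R+2$, and Fact~\ref{unique1} delivers uniqueness of the CP decomposition.

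The step I expect to be hardest is the rigorous justification of $k_{\bm C}\geq 2$: it relies on the physical observation that different persons occupy different locations so that their frequency-dependent reflection signatures across the 60 subcarriers are not proportional. In a contrived geometry where two persons induced identical subcarrier profiles, two columns of $\bm C$ could be collinear and Kruskal's bound would fail; ruling this out in the generic (measure-one) sense is straightforward, but a fully deterministic argument would need an explicit separability assumption on the person-to-receiver geometry. The Vandermonde step for $\bm A$ and $\bm B$, by contrast, is essentially a direct consequence of distinct breathing frequencies together with the slack condition $I,J\gg 2R$.
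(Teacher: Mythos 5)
Your proposal follows essentially the same route as the paper: both invoke Kruskal's condition from Fact~\ref{unique1} with $L=2R$, obtain $k_{\bm A}=k_{\bm B}=2R$ from the rank-$2R$ Hankel/Vandermonde structure established in Theorem~\ref{th:rank}, and bound $k_{\bm C}\geq 2$ to reach $2R+2R+2=2(2R)+2$. The only divergence is cosmetic --- you justify $k_{\bm C}\geq 2$ via the subcarrier-dependent path delays $e^{-j2\pi f_i\tau_k}$, whereas the paper appeals to the independence of the phase differences between antenna pairs 1--2 and 2--3; both are informal physical arguments of the same strength, and your added detail on the Vandermonde step is a faithful elaboration of what the paper leaves implicit.
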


\begin{proof}
The proposed CSI tensor $\chi$ is created by $K$ Hankel matrix, where the $r$th Hankel matrix ${\bm H}_r$ is rank-2R according to Theorem~\ref{th:rank}. Thus, for the $k$-rank of the matrices ${\bm A}$ and ${\bm B}$, we have $k_{\bm A}=2R$ and $k_{\bm B}=2R$. On the other hand, because phase differences of subcarriers between antennas 1 and 2, and antennas 2 and 3 are independent, the $k$-rank of matrix $\bm C$ has $k_{\bm C} \geq 2$. Thus, the expression is $k_{\bm A} + k_{\bm B} + k_{\bm C} \geq 2R+2R+2=2 (2R)+2$, which satisfies the conditions in Theorem~\ref{unique1}. This proofs the theorem.
\end{proof}

Theorem~\ref{unique2} indicates that the CP decomposition of the created CSI tensor is unique, which can be used to effectively estimate multiple breathing rates. In the proposed TensorBeat system, we leverages the matrix ${\bm A}=[a_1, a_2, \cdots, a_{2R}]$ as decomposed signals $S_1, S_2, \cdots, S_{2R}$. For example, Fig.~\ref{CP} shows the results of CP decomposition for CSI tensor data from three persons ($R=3$). We can see that there are six signals. Moreover, signals 1 and 2 are similar, signals 3 and 5 are similar, and signals 4 and 6 are similar. This is because CP decomposition cannot guarantee that similar signals are located in adjacent locations (i.e., the output signals are randomly indexed). Thus, we need to identify the signal pairs among the decomposed signals for each person, which will be addressed in Section~\ref{subsec:sma}. 

\begin{figure} [!t]
\centerline{\includegraphics[width=3.5in]{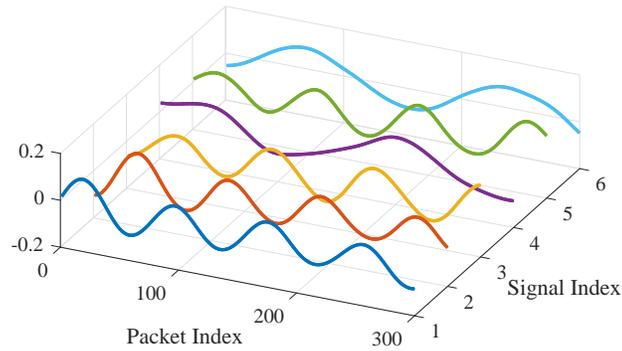}}
\caption{CP decomposition results for a CSI tensor of three persons.}
\label{CP}
\end{figure}

\begin{figure} [!t]
\centerline{\includegraphics[width=3.5in]{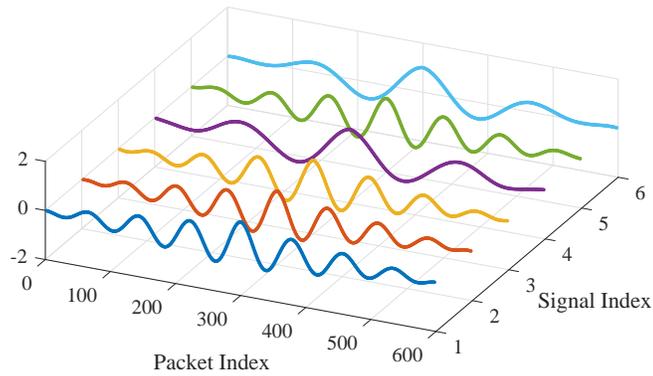}}
\caption{Autocorrelation of the decomposed breathing signals.}
\label{auto}
\end{figure}

\subsection{Signal Matching Algorithm \label{subsec:sma}}

The CP decomposition of CSI tensor data yields $2R$ decomposed signals, i.e., $S_1, S_2, \cdots, S_{2R}$, which, however, are randomly indexed. In this section, we propose a signal matching algorithm to pair the two similar decomposed signals that belong to the same person. The main idea is to leverage the autocorrelation to strengthen the periodicity of decomposed signals and use the Dynamic Time Warping (DTW) method to compute the similarity value for any pair of signals. Finally, we apply the stable-roommate matching algorithm to pair the decomposed signals for each person, using the DTW values as the closeness metric. 
We introduce the proposed signal matching algorithm in the following.

\subsubsection{Autocorrelation and Dynamic Time Warping}

After CP decomposition of CSI tensor data, we first compute the autocorrelation function of the $2R$ decomposed signals to strengthen their periodicity. 
We evaluate the autocorrelation function of the decomposed signals for two reasons. The first is that the autocorrelation of a decomposed signal can increase the data length, 
which 
helps to improve the accuracy of the peak detection. Second, because the decomposed signals have phase shift and nonalignment, using the autocorrelation of decomposed signals can reduce such shifts and strengthen the periodicity of the decomposed signals. Fig.~\ref{auto} shows the autocorrelation of the decomposed breathing signals produced by CP decomposition. We can see that each 
autocorrelation signal exhibits a more obvious periodicity than that of the original decomposition signals. Moreover, the data length 
is increased from 300 to 600. 

Furthermore, we employ the DTW approach to measure the distance between any pair of autocorrelation signals, which is different from the Euclidean distance method that computes the sum of distances from each value on one curve to the corresponding value on the other curve. 
Moreover, the Euclidean distance method believes that two autocorrelation signals with the same length are different as long as one of them has a small shift. However, DTW can automatically identify these shifts and provide the similar distance measurement between two autocorrelation signals by aligning the corresponding time series, thus overcoming the limitation of the Euclidean distance method. 

With the autocorrelation signals, we design the DTW method for measuring their pairwise distance. Given two autocorrelation signals and a cost function, the DTW method seeks an alignment by matching each point in the first autocorrelation signal to one or more points in the second signal, thus minimizing the cost function for all points~\cite{Dude,DTW1,DTW2}. To reduce the computational complexity of DTW, we apply downsampling to the two autocorrelation signals, which leads to a reduced number of packets $N'$. Then, consider two downsampled autocorrelation signals $P_i=[P_i(0),P_i(1),\cdots,P_i(N'-1)]$ and $P_j=[P_j(0),P_j(1),\cdots,P_j(N'-1)]$, we need to find a warping path $W=[w_1, w_2, \cdots, w_L]$, where $L$ is the length of the path, and the $l$th element of the warping path is $w_l=(m_l,n_l)$, where $m$ and $n$ are the packet index for the two downsampled autocorrelation signals. The objective is to minimize the total cost function by implementing the non-linear mapping between two downsampled autocorrelation signals $P_i$ and $P_j$. The formulated problem is given by  
\begin{eqnarray} \label{DTW1}
    \min  && \sum_{l=1}^{L}\left\| P_i(m_l)-P_j(n_l) \right\| \\
		\mbox{s.t.} && (m_1,n_1)=(0,0)\\
		            && (m_L,n_L)=(N'-1,N'-1)\\
				        && m_l \leq m_{l+1} \leq m_l+1\\
								&& n_l \leq n_{l+1} \leq n_l+1. 
\end{eqnarray} 

The objection function is to minimize the distance between two downsampled autocorrelation signals. 
The first and second constraints are boundary constraints, which require that the warping path starts at $P_i(0)$ and $P_j(0)$ and ends at $P_i(N'-1)$ and $P_j(N'-1)$. This can guarantee all points of the two downsampled autocorrelation signals are used for measuring their distance, thus avoiding to use only local data. Furthermore, the third and fourth constraints are monotonic and marching constraints, which require that there be no cycles for $w_i$ and $w_j$ in the warping path and the path is increased with the maximum 1 at each step. 

We apply dynamic programming to solve problem~\eqref{DTW1}, to obtain the minimum distance warping path between two  downsampled autocorrelation signals. We consider a two-dimensional cost matrix $\mathcal{C}$ with size $N' \times N'$, whose element $\mathcal{C}(m_l,n_l)$ is the minimum distance warping path for two downsampled autocorrelation signals $P_i=[P_i(0),P_i(1), \cdots, P_i(m_l)]$ and $P_j=[P_j(0),P_j(1),\cdots,P_j(n_l)]$. We design the recurrence equation in dynamic programming as follows.
\begin{eqnarray} \label{DTW2}
    \mathcal{C}(m_l,n_l) = 
		            \left\| P_i(m_l)-P_j(n_l) \right\| + 
								\min{[\mathcal{C}(m_l-1,n_l), \mathcal{C}(m_l,n_l-1), \mathcal{C}(m_l-1,n_l-1)]}. 
\end{eqnarray}
By filling all elements of the cost matrix $\mathcal{C}$, the value $\mathcal{C}(N'-1,N'-1)$ can be computed as the DTW value between the two downsampled autocorrelation signals. The time complexity is $O(N'^2)$. Fig.~\ref{DTW} shows the DTW results for downsampled autocorrelation signals 4 and 6 (the upper plot), and downsampled autocorrelation signals 4 and 3 (the lower plot), where we set the downsampling number of packets as $N'=\frac{N}{10}=60$. It can be seen that downsampled autocorrelation signals 4 and 6 have a smaller DTW value (i.e., 3.65) than downsampled autocorrelation signals 4 and 3 (i.e., 13.7). That is, signals 4 and 6 are more similar, and more likely to belong to the same person. We also find that the downsampled autocorrelation signals have a high similarity in the center than that on the boundary, and it can reduce the phase shift values. Thus, the DTW value 
is a good measure of the distance between two downsampled autocorrelation signals. We need to compute the DTW values for all the downsampled autocorrelation signal pairs, which are then used in stable roommate matching. 

\begin{figure} [!t]
\centerline{\includegraphics[width=3.5in]{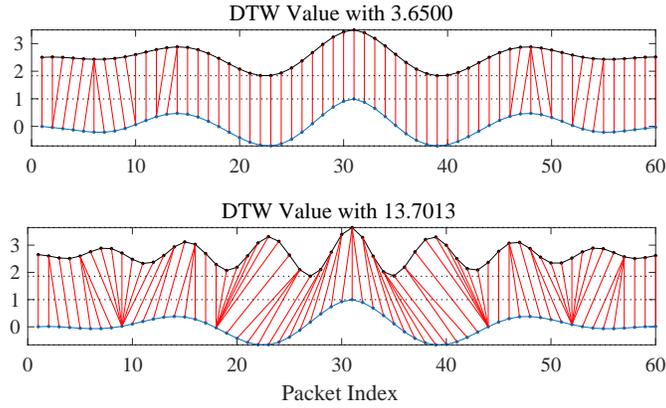}}
\caption{DTW results for downsampled autocorrelation signals 4 and 6 (the upper plot), and downsampled autocorrelation signal 4 and 3 (the lower plot), respectively.}
\label{DTW}
\end{figure}

\subsubsection{Stable Roommate Matching}

Since the CP decomposed signals are randomly indexed (see Fig.~\ref{auto}), we need to identify the pair for each person. With the DTW values for all downsampled autocorrelation signal pairs, we can model this problem as a stable roommate matching problem~\cite{matching1,matching2,matching3}. There are a group of $2R$ signals, and each signal maintains a preference list of all other signals in the group, where the preference value for another signal is the inverse of the corresponding DTW value (i.e., distance). The problem is to pair the signals, such that there is no such a pair of signals that both of them have a more desired selection than their current selection, i.e., to find a stable matching~\cite{matching1,matching2,matching3}. The proposed signal matching algorithm is presented in Algorithm~\ref{alg:one}. 


\begin{algorithm}[t]
\SetAlgoLined
\KwIn{Decomposed signals: $S_1, S_2, \cdots, S_{2R}$.}
\KwOut{Matched signal pairs.}
Compute autocorrelation of all decomposed signals\;
Compute the DTW values for every pair of autocorrelation signals\;
Each autocorrelation signal sets its preference list using the DTW values\;
//step1\;
\For{$signal\_num=1:2R$}{
Set $finish\_flag=0$\; 
Set $scan\_num=signal\_num$\;
    \While{$finish\_flag=0$}{
		    \eIf{the proposal is the first one}{
           Proposing signal's $propose\_num$=the current choice\;
					 Set $finish\_flag=1$\;
					 Proposed signal's $accept\_num=scan\_num$\;
         }{
           \eIf{the signal prefers the former proposal}{
           Reject the current proposal symmetrically\;
					 Propose to the next choice\;
           }{
           Accept the current proposal\;
					 Reject the former proposal symmetrically\;
					 $scan\_num=$ proposed signal's $accept\_num$\;
					 Propose to the next choice\;
           }
         }
		}		
}

\For{$ signal\_num=1:2R $}{
	 Reject signals that have less than $accept\_num$ in every preference list symmetrically\; 
}

//step2\;
$signal\_num=1$\;
\While{$signal\_num < 2R+1$}{
   \eIf{$propose\_num=accept\_num$}{
	     $signal\_num$=$signal\_num$+1;   
	     }{
			Let $p_1$ be a signal whose preference list contains more than one element\;
			\While{$p$ sequence is not cyclic}{
			   $q_i$ = the second preference of $p_i$'s current list\;
			   $p_{i+1}$ = the last preference of $q_i$'s current list\;
			}
			Denote $p_s$ as the first element in the $p$ sequence to be repeated and $r$ as the length of the circle\;
			\For{$i=1:r$}{
			      Reject matching $\left( q_{s+i-2}, p_{s+i-1} \right)$ symmetrically\;  
			    }
			$signal\_num=1$\;
			}
		}
	 Obtain signal matching pairs based on all processed preference lists\;
   \caption{Signal Matching Algorithm}
\label{alg:one}
\end{algorithm}

We first compute the autocorrelation of all decomposed signals. Then each autocorrelation signal populates its preference list with other autocorrelation signals according to the DTW values. 
The stable roommate matching algorithm is executed in two steps. In step 1, each signal proposes to other signals according to its preference list. If a signal $m$ receives a proposal from another signal $n$, we implement the following strategy: (i) signal $m$ rejects signal $n$ if it has a better proposal from another signal; (ii) signal $m$ accepts signal $n$'s proposal if it is better than all other proposals that signal $m$ currently holds. Moreover, signal $n$ stops to propose when its proposal is accepted, while it needs to continue to propose to other signals if being rejected. This strategy is implemented in step 1 of the signal matching algorithm, where we use $finish\_flag$ to mark whether the current $signal\_num$ is accepted or not. Moreover, variables $accept\_num$ and $propose\_num$ are used to record the current signal's proposed number and proposing number, respectively. Also, variable $scan\_num$ is used to record the current scanning signal number. After completing step 1, every signal holds a proposal or one signal has been rejected by other signals (this case hardly happens in TensorBeat, because the CP decomposition produces two very similar signals with high probability for each person). Then, we need to delete some elements in all the preference lists based on the following method, which is that if signal $m$ is the first on signal $n$'s list , then signal $n$ is the last on signal $m$'s list. For the proposed algorithm, every signal can reject signals that have less than $accept\_num$ in its preference list symmetrically (reject each other).

An example is shown in Fig.~\ref{CP}. According to the DTW values, signals 1, 2, 3, 4, 5, and 6 have their preference lists as (2, 3, 5, 6, 4), (1, 3, 5, 6, 4), (5, 1, 2, 6, 4), (6, 5, 3, 2, 1), (3, 2, 1, 4, 6), and (4, 5, 3, 2, 1), respectively. When step 1 is executed, we have: Signal 1 proposes to 2, and signal 2 holds 1;  Signal 2 proposes 1, and signal 1 holds; Signal 3 proposes to signal 5, and signal 5 holds; Signal 4 proposes to signal 6, and signal 6 holds; Signal 5 proposes to signal 3, and signal 3 holds; Signal 6 proposes to signal 4, and signal 4 holds. It is easy to find three pairs (1,2), (3,5), and (4,6). 

Although most of decomposed signals are paired in step 1, step 2 will still be necessary for the more challenging cases of  
much more breathing signals and NLOS environments. 
In step 2, we consider the reduced preference lists, where some of the lists have more than one signals. By implementing step 2, we can reduce the preference lists such that each signal only holds one proposal. The main idea is that we need to find some all-or-nothing cycles and symmetrically delete signals in the cycle sequence by rejecting the first and last choice pairs. The signal in the cycle accepts the secondary choice, thus obtaining a stable roommate matching. To find all-or-nothing cycles, let $p_1$ be a signal with a preference list that contains more than one element, and generate the sequences such that $q_i$ = the second preference of $p_i$'s current list, and $p_{i+1}$ = the last preference of $q_{i}$'s current list. After the cycle sequence generation, denote $p_s$ as the first element in the $p$ sequence to be repeated. Then, we reject matching $(q_s+i-2, p_s+i-1)$ for $i=1$ to $r$ symmetrically, where $r$ is the length of the cycle. Finally, we can obtain signal matching pairs based on all processed preference lists. 
The computational complexity of Algorithm~\ref{alg:one} is $\textit{O}(R^2)$, because steps 1 and 2 each has a complexity of $\textit{O}(R^2)$, respectively.

\subsection{Breathing Rate Estimation}

\subsubsection{Signal Fusion and Autocorrelation}

After obtaining the outcomes of the signal matching algorithm, TensorBeat next applies peak detection to estimate the breathing rates for multiple persons. Comparing to the FFT method, a higher resolution in the time domain can be achieved. 
To implement peak detection, we first need to combine the decomposed signal pairs for each person into a single signal, by taking the average of the signal pairs. 
Averaging can decrease the variance of the decomposed signals while preserving the same period. 
For example, Fig.~\ref{fus} shows the fusion results based on the outcome of the signal matching algorithm, where three smoothly decomposed signals with different periods are obtained. To strengthen the accuracy of peak detection, we compute the autocorrelation function again for every fused signal. 
Fig.~\ref{auto2} shows the autocorrelation of the three fused signals. It can be seen that the length of data is increased from 300 to 600 and the number of peaks of every signal are also increased, which help to improve the estimation accuracy. 

\begin{figure} [!t]
\centerline{\includegraphics[width=3.5in]{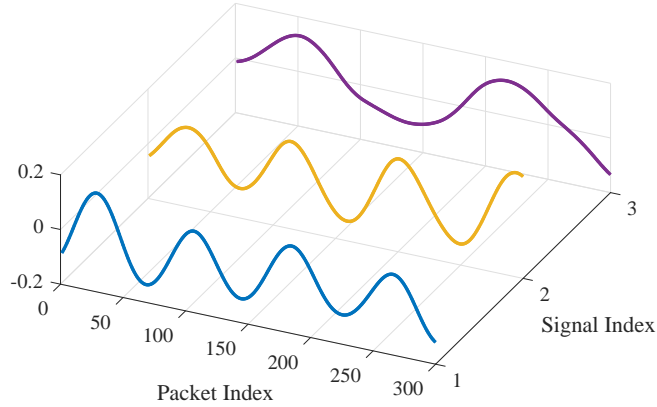}}
\caption{Fusion results based on the outcomes of the signal matching algorithm.}
\label{fus}
\end{figure}

\begin{figure} [!t]
\centerline{\includegraphics[width=3.5in]{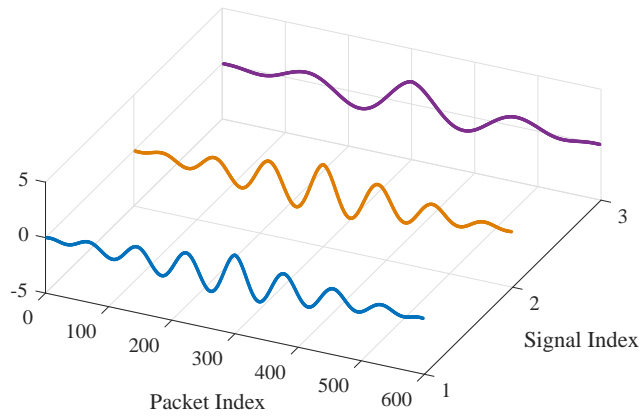}}
\caption{Autocorrelation of fused signals.}
\label{auto2}
\end{figure}

\subsubsection{Peak Detection}

Although breathing signal is generated by the small periodic chest movement of inhaling and exhaling, the phase difference data can effectively capture the breathing rate. Traditionally, estimation of breathing rates is achieved with FFT based methods. However, 
the FFT approach may have limited accuracy, because the frequency resolution of breathing signals is based on the window size of FFT. 
When the window size becomes larger, the accuracy will be higher, but the time domain resolution will be reduced. 
Also see Figs.~\ref{motivation1} and~\ref{motivation2} for the limitation of the FFT based approach for the multi-person scenario. 
Therefore, we leverage peak detection instead in TensorBeat system to achieve accurate breathing rate estimation for each of autocorrelations of fused signals.

For peak detection, the traditional method based on amplitude needs to detect the fake peak, which is not a real peak but has larger values than its two immediate neighboring points. To avoid the fake peak, a large moving window can be used to identify the real peak based on the maximum breathing periodicity. This method is not robust, which requires adjusting the window size. In TensorBeat, we only consider a smaller moving window of 7 samples wide. This is because we leverage the Hankel matrix and CP decomposition to smooth out the breathing curves, which hardly contains any fake peaks. Then, for the $i$th autocorrelation curve of fused signal, we seek all the peaks by determining whether or not the medium of the 7 samples in the moving window is the maximum value. Finally, we consider the median of all peak-to-peak intervals as the final period of the $i$th breathing signal, which is denoted as $T_i$. Finally, the estimated breathing rates can be computed as $f_i=60/T_i$, for $i=1$ to $R$.
 
\section{Experimental Study}\label{sec:sml}

\subsection{Experiment Configuration}

In this section, we validate the TensorBeat performance with an implementation with 5 GHz Wi-Fi devices. 
To obtain 5 GHz CSI data, we use a desktop computer and a Dell laptop as access point and mobile device, respectively, both of which are equipped with an Intel 5300 NIC. We use the desktop computer instead of the commodity routers, because none is equipped with the Intel 5300 NIC. The operating system is Ubuntu destop 14.04 LTS OS for both the access point and the mobile device. The PHY is the IEEE 802.11n OFDM system with QPSK modulation and 1/2 coding rate. Moreover, the access point is set in the monitor model and the distance between its two adjacent antennas is approximately 2.68 cm, which is half of the wavelength of 5GHz WiFi. Also, the mobile device is set in the injection model and uses one antenna to transmit data. Moreover, we use omnidirectional antennas for both the receiver and transmitter to estimate breathing signs beats. 
With the packet injection technique with LORCON version 1, we can obtain 5 GHz CSI data from the three antennas of the receiver.

Our experimental study is with up to five persons over a period of six months. The experimental scenarios include a computer laboratory, a through-wall scenario, and a corridor, as shown in Fig.~\ref{deployment}. The first scenario is within a 4.5 $\times$ 8.8 $m^2$ laboratory, where both single person and multi-person breathing rate estimation experiments are conducted. There are lots of tables and desktop computers crowded in the laboratory, which block parts of the LOS paths and form a complexity radio propagation environment. The second setup is a through-wall environment, where single person breathing rate estimation is tested due to the relatively weaker signal reception. The person is on the transmitter side, and the receiver is behind a wall in this experiment. The third scenario is a long corridor of 20 m, where the maximum distance between the receiver and transmitter is 11 m in the experiment. This scenario is still considered for single person breathing rate monitoring. We use a NEULOG Respiration to record the ground truths for single person breathing rates. The single person breathing rates estimation can be easily implemented by removing the signal matching algorithm, because there are only two decomposed signals after CP decomposition in this case. For muti-person breathing rate estimation in the first scenario, all persons participating in the experiment record their breathing rates by using a metronome smartphone application with 1 bpm accuracy at the same time. We consider five persons are stationary for LOS and NLOS environments for breathing monitoring. Moreover, there are no other persons in the breathing measurement area. 

\begin{figure} [!t]
\centerline{\includegraphics[width=3.5in]{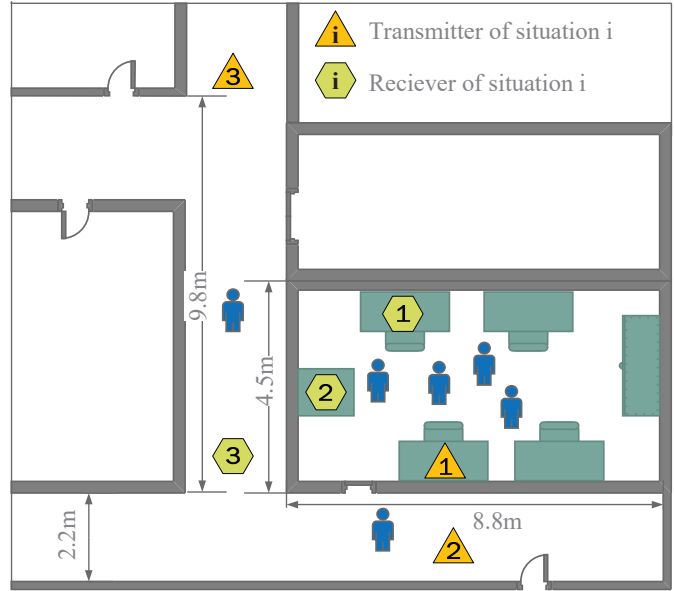}}
\caption{Experimental setup: computer laboratory, through-wall, and long corridor scenarios.}
\label{deployment}
\end{figure}

For multi-person breathing rate estimation, we need to define a proper metric for evaluating TensorBeat's performance. For $R$ estimated breathing rates [$f_1, f_2,...f_R$], the $i$th breathing rate estimation error, $E_i$, is defined as 
\begin{eqnarray} \label{eq:m1}
    E_i=\left|f_i-\hat{f}_i\right|, & \mbox{for } i=1,2,\cdots,R, 
\end{eqnarray}
where $\hat{f}_i$ is the ground truth of the $i$th breathing rate. 
We also define a new metric termed success rate, denoted as $SR$, which is defined as 
\begin{eqnarray} \label{eq:m2}
SR=\frac{N\{\max_i{\{E_i}\} < 2 \mbox{bpm}\}}{N\{E\}} \times 100\%,
\end{eqnarray}
where $N\{\max_i{\{E_i\}} < 2 \mbox{bpm}\}$ means the number of repeated experiments of the maximum breathing rate error less than 2 bpm, and $N\{E\}$ is the number of repeated experiments. We adopt the success rate metric because there are weak signals for multi-person experiments in indoor experiments at different locations, and 
sometimes a breathing signal may not be successfully detected~\cite{when}.

\subsection{Performance of Breathing Estimation}

In Fig.~\ref{breathing}, we present the cumulative distribution functions (CDF) of the estimation errors for single person breathing rate detection for three different experiment scenarios. We can see that for TensorBeat, high estimation accuracy of breathing rates can be achieved in all the three scenarios. The maximum estimation error is less than 0.9 bpm. Moreover, it is noticed that 50\% of the tests for the computer laboratory experiment have errors less than aboout 0.19 bpm, while the tests for the corridor and through-wall scenarios have errors less than approximately 0.25 bpm and 0.35 bpm, respectively. Thus, the performances in the laboratory setting is better than that in the corridor and through-wall scenarios. This is because the laboratory has a smaller space and the breathing signal is stronger than that of other two cases with larger attenuation due to the long distance and the wall. 

\begin{figure} [!t]
\centerline{\includegraphics[width=3.5in]{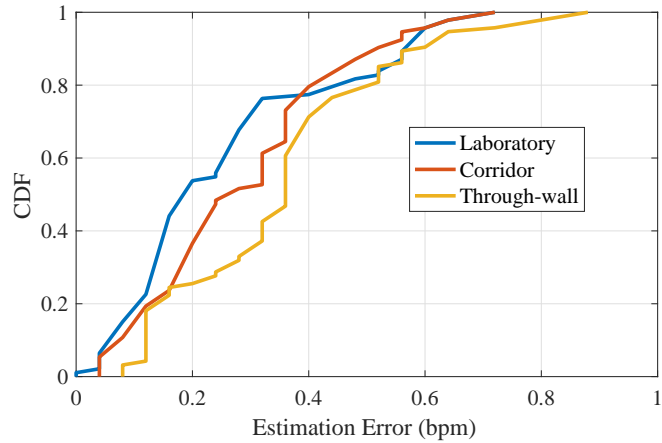}}
\caption{Performance of single person breathing rate estimation in the computer laboratory, through-wall, and long corridor scenarios.}
\label{breathing}
\end{figure}

Fig.~\ref{12345cdf} presents the performance of breathing rate estimation for different number of persons. It is noticed that higher accuracy is achieved for the single person test, where approximately 96\% of the test data have an estimation error less than 0.5 bpm. The five-person test has the worse performance, where approximately 62\% of the test data have an estimation error less than 0.5 bpm. Moreover, we fine that the performances of the two-person and three-person tests are similar, both of which can have an error smaller than 0.5 bpm for 93\% of the test data. Generally, when the number of persons is increased, the performance of breathing rate estimation gets worse. In fact, when the number of breathing signals is increased, the distortion of the mixed received signal will become larger, thus leading to high estimation errors. 

\begin{figure} [!t]
\centerline{\includegraphics[width=3.5in]{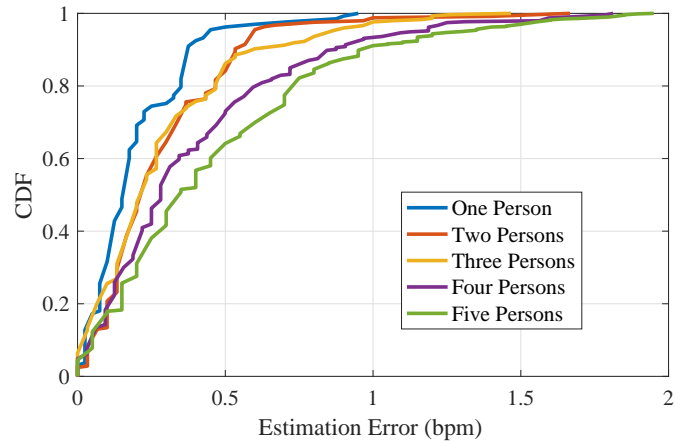}}
\caption{Performance of breathing rate estimation for different number of persons (computer laboratory).}
\label{12345cdf}
\end{figure}

Fig.~\ref{SR_12345} plots the success rates for different number of persons. We find that although the success rate for one person is the highest, there are still few of test data that cannot obtain high accuracy breathing rates estimation. These test data should come from different locations in the indoor environments, where parts of the received signals are severely distorted. In fact, we find that low phase difference usually occurs when the SNR is low. 
On the other hand, we can see that breathing rate estimation for two persons also has a high success rate, because the probability for two persons to have exactly the same breathing rate is very low. 
When the number of persons is increased, the chance of getting two close breathing rates becomes higher. Even in this case, 
TensorBeat can still effectively separate them with a high success probability. With the increase of the number of persons, the success rate for TensorBeat system decreases. The reason is that each breathing rate is more likely to cover each other and the strength of the received signal becomes lower. From Fig.~\ref{SR_12345}, we can see that the success rate is about 82.4\% when the number of persons is five. 

\begin{figure} 
\centerline{\includegraphics[width=3.5in]{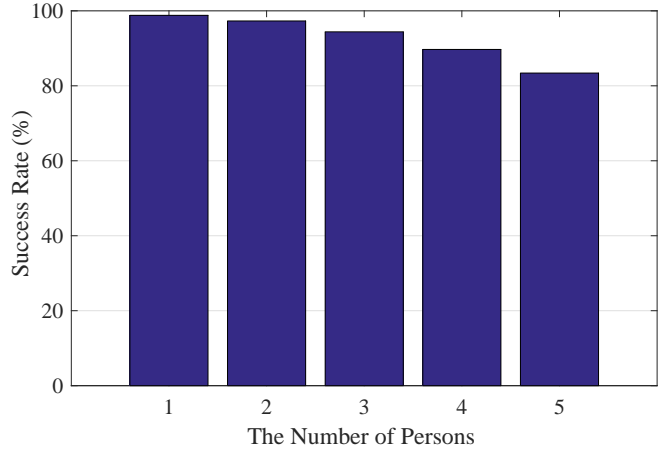}}
\caption{Success rates for different number of persons (computer laboratory).}
\label{SR_12345}
\end{figure}

Fig.~\ref{samplerate} shows the success rate for different sampling rates. In this experiment, there are four persons and the window size is set to 30 s. From Fig.~\ref{samplerate}, we can see that with the increase of the number of sampling rates, the success rate is also increased. It is noticed that the success rates for 5 Hz and 30 Hz are approximately 70\% and 90\%, respectively. As the sampling rate is increased, the length of the data for CP decomposition is increased for the 30 s window size case, which helps to improve the estimation accuracy. Furthermore, we find that the performance becomes stable when the sampling rate exceeds 20 Hz, indicating that a sampling rate of 20 Hz is sufficient for CP decomposition. Thus, we set the sampling rate to 20 Hz for the TensorBeat experiments. 

\begin{figure} 
\centerline{\includegraphics[width=3.5in]{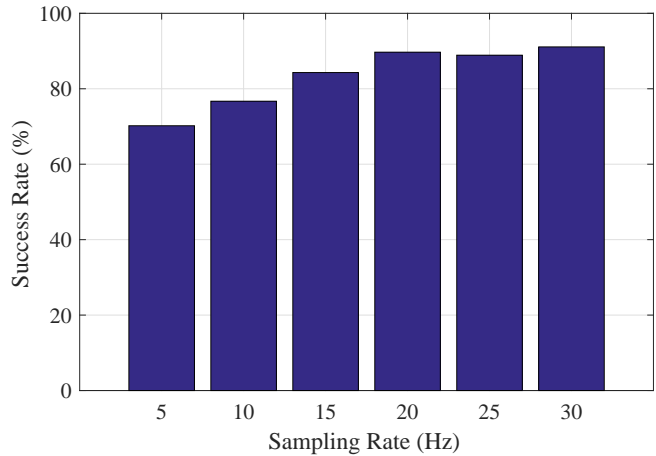}}
\caption{Success rates for different sampling rates (computer laboratory).}
\label{samplerate}
\end{figure}

Fig.~\ref{window_size} plots the success rates for different window sizes. This experiment is for the computer laboratory scenario with four persons and the sampling rate is set to 20 Hz. From Fig.~\ref{window_size}, we can see that the success rate is greatly increased by increasing the window size of the Hankel matrix from 15 s to 30 s. This is because Hankelizaiton will take half of the data to smooth the phase difference signal, which reduces the resolution in the time domain. Thus, we need to increase the window size to improve the estimation accuracy. Furthermore, the change of success rate is small for window sizes from 30 s to 45 s. Thus, we select the window size of 30 s for the TensorBeat experiments.

\begin{figure} 
\centerline{\includegraphics[width=3.5in]{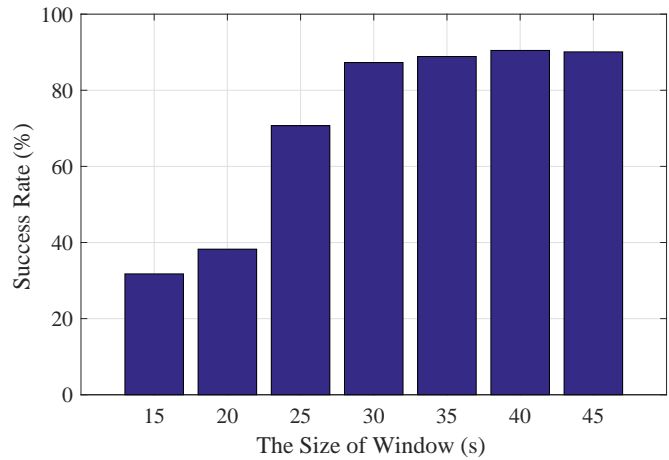}}
\caption{Success rates for different window sizes (computer laboratory).}
\label{window_size}
\end{figure}

Finally we examine the impact of LOS and NLOS scenarios. The success rates are plotted in Fig.~\ref{los_nlos}. 
In this experiment, we consider the challenge condition of the NLOS scenario, where all the persons stay on the LOS path between the transmitter and receiver, i.e., they form a straight line and block each other. 
From Fig.~\ref{samplerate}, we find that the performances for LOS and NLOS are nearly the same for the cases of two or three persons, where high estimation accuracy can be achieved. This is due to the WiFi multipath effect, which is regarded harmful in general but becomes helpful in breathing rate estimation when tensor decomposition is used. The breathing signal of every person can still be captured at the receiver from the phase difference data. However, when the number of the persons is further increased, the success rate will decrease quickly. In fact, the strength of the breathing signals for some persons will become too weak to be detected when there are too many people blocking each other. 

\begin{figure} 
\centerline{\includegraphics[width=3.5in]{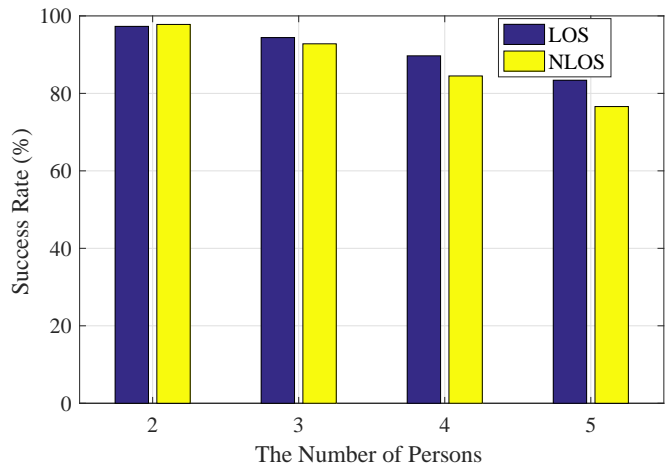}}
\caption{Success rates for (i) when multiple persons form a line in the LOS path between the transmitter and receiver; (ii) when multiple persons are in scattered around (computer laboratory).}
\label{los_nlos}
\end{figure}
\section{Related Work} \label{sec:related}
This work is closely related to sensor based and RF signal based vital signs monitoring, as well as CSI based indoor localization and human activity recognition, which are discussed in the following.

Sensor-based vital signs monitoring with wearable and smart devices employs the special hardware attached the person body to monitor breathing and heart rate data. The capnometer can measure carbon dioxide (CO2) concentrations in respired gases, which are employed to monitor patients breathing rate in hospital. However, it is uncomforted for patients to wear them, which are thus leveraged in clinical environments~\cite{Capnometers}. Photoplethysmography (PPG) is an optical technique to monitor the blood volume changes in the tissues by measuring light absorption changes, thus requiring the sensors attached to persons finger such as pulse oximeters~\cite{photoplethysmographic}. In addition, smartphone can use the camera to detect light changes from the video frames which can extract the PPG signal to monitor the heart rate~\cite{optical}. Moreover, the smartphones can also estimate breathing rate by using the built-in accelerometer, gyroscope~\cite{Zephyr} and microphone~\cite{Sleep}, which require persons to put smartphones near-by and wear sensors for breathing monitoring. However, these techniques based on sensors cannot be applied for remote monitoring vital signs.

RF based systems for vital signs monitoring use wireless signals to track the breathing-induced chest change of a person, which are mainly based on radar and WiFi techniques. For radar based vital signals monitoring, Vital-Radio employs frequency modulated continuous wave (FMCW) radar to estimate breathing and heart rates, even for two person subjects in parallel~\cite{smart}. But the system requires a custom hardware with a large bandwidth from 5.46 GHz to 7.25 GHz. For WiFi based vital signs monitoring, UbiBreathe system employ WiFi RSS for breathing rate monitoring, which, however, requires the device placed in the line of sight path between the transmitter and the receiver for estimating the breathing rate~\cite{UbiBreathe}. Moreover mmVital based on RSS can use 60 GHz millimeter wave (mmWave) signal for breathing and heart rates monitoring with the larger bandwidth about 7GHz, which cannot monitor the longer distance and require high gain directional antennas for the transmitter and the receiver~\cite{Millimeter}\cite{Gong10}. Recently, the authors leverage the amplitudes of CSI data to monitoring vital signs~\cite{Tracking}. This work is mainly to track the vital signs when a person is sleeping, which is limited for monitoring a maximum of two persons at the same time.

In additional to vital signs monitoring, recently, CSI based sensing systems have also been used for indoor localization and human activity recognition~\cite{SDR}. CSI-based fingerprinting systems have been proposed to obtain high localization accuracy. FIFS is the first work to uses the weighted average of CSI amplitude values over multiple antennas for indoor localization~\cite{FIFS}. To exploit the diversity among the multiple antennas and subcarriers, DeepFi leverage 90 CSI amplitude data from the three antennas with a deep autoencoder network for indoor localization~\cite{DeepFi}. Also, PhaseFi leverages calibrated CSI phase data for indoor localization based on deep learning~\cite{PhaseFi}. Different from CSI-based fingerprinting techniques, SpotFi system leverages a super-resolution algorithm to estimate the angle of arrival (AoA) of multipath components for indoor localization based on CSI data from three antennas~\cite{SpotFi}. On the other hand, E-eyes system leverages CSI amplitude values for recognizing household activities such as washing dishes and taking a shower~\cite{activity}. WiHear system employs specialized directional antennas to measure CSI changes from lip movement for determining spoken words~\cite{Hear}. CARM system considers a CSI based speed model and a CSI based activity model to build the correlation between CSI data dynamics and a given human activity~\cite{Wei}. Although CSI based sensing are effective for indoor localization and activity recognitions, there are few works for using CSI phase difference data to detect multiple persons behaviors at the same time.

The TensorBeat system is motivated by these interesting prior works. To the best of our knowledge, we are the first to leverage CSI phase difference data for multiple persons breathing rate estimation. We are also the first to employ tensor decomposition for RF sensing based vital signs monitoring, which can be also employed for indoor localization and human activity recognition.

\section{Conclusions} \label{sec:conC}

In this paper, we proposed TensorBeat, tensor decomposition for estimating multiple persons breathing beats with commodity WiFi. The proposed TensorBeat system employed CSI phase difference data to obtain the periodic signals from the movements of multiple breathing chests by leveraging tensor decomposition. We implemented several signal processing methods including data preprocessing, CP decomposition, signal matching algorithm, and peak detection in TensorBeat. We validate the performance of TensorBeat with extensive experiments under three indoor environments. Our analysis and experimental study demonstrated that the proposed TensorBeat system can achieve satisfactory performance for multiple persons breathing estimation. 

\bibliographystyle{ACM-Reference-Format-Journals}
\bibliography{acmsmall-sample-bibfile}

\end{document}